\documentclass[twocolumn,conference,letterpaper]{IEEEtran}
\usepackage[left=0.75in,right=0.75in,top=1in,bottom=0.75in]{geometry}

\usepackage{cite}
\usepackage{amssymb,amsmath,latexsym,amsfonts,amsthm,mathtools}

\usepackage{graphicx}
\usepackage{float,subcaption}
\usepackage{textcomp}
\usepackage{xcolor}
\usepackage{hyperref}
\hypersetup{colorlinks, breaklinks, citecolor=blue, linkcolor=blue, urlcolor=blue}
\usepackage[nameinlink]{cleveref}
\Crefformat{figure}{#2Fig.~#1#3}
\Crefmultiformat{figure}{Figs.~#2#1#3}{ and~#2#1#3}{, #2#1#3}{ and~#2#1#3}

\theoremstyle{plain}
\newtheorem{theorem}{Theorem}
\newtheorem{lemma}{Lemma}
\newtheorem{corollary}{Corollary}

\theoremstyle{remark}
\newtheorem{remark}{Remark}

\theoremstyle{definition}
\newtheorem{problem}{Problem}
\newtheorem{definition}{Definition}

\usepackage{mathrsfs}
\usepackage{newtxmath}
\usepackage{booktabs}

\newcommand{\R}{\mathbb{R}}
\newcommand{\DMC}{\mathrm{DMC}}
\newcommand{\eqdef}{\triangleq}

\IEEEoverridecommandlockouts

\begin{document}

\title{Risk-Aware Input-Constrained Safe Intercept Guidance
Against Multiple Moving Defenders}

\author{Praveen Kumar Ranjan,~\IEEEmembership{Member,~IEEE},
        Abhinav Sinha,~\IEEEmembership{Senior Member,~IEEE},\\
        Yongcan Cao,~\IEEEmembership{Senior Member,~IEEE},
        Alexander Von Moll,~\IEEEmembership{Senior Member,~IEEE},\\
        and Isaac E. Weintraub,~\IEEEmembership{Senior Member,~IEEE}
        \thanks{P. K. Ranjan and Y. Cao are with the Unmanned Systems Lab,
        Department of Electrical and Computer Engineering, The University of
        Texas at San Antonio, San Antonio, TX 78249, USA (e-mails:
        \url{praveen.ranjan@my.utsa.edu}, \url{yongcan.cao@utsa.edu}). A. Sinha
        is with the Guidance, Autonomy, Learning, and Control for Intelligent
        Systems (GALACxIS) Lab, Department of Aerospace Engineering, University of Cincinnati, OH 45221, USA (e-mail:
        \url{abhinav.sinha@uc.edu}). A. Von Moll and I. E. Weintraub are with
        the Mission Autonomy Branch, Air Force Research Laboratory,
        Wright-Patterson AFB, OH 45433, USA (e-mails:
        \url{alexander.von_moll@us.af.mil}, \url{isaac.weintraub.1@us.af.mil}).}
        }

\maketitle
% \thispagestyle{empty}

%------------------------------- abstract -------------------------------------
\begin{abstract}
This paper develops a risk-aware guidance law for an attacker to
intercept a stationary target in the presence of multiple moving defenders while respecting the control input limits. We represent defender threats via engagement zones (EZs) (regions of state-space from which attacker interception is feasible) and employ the dynamic maneuvering cue (DMC) (minimum heading correction to exit EZs) as a deterministic measure of geometric risk. Safety is enforced through a prescribed bound on the DMC to enable controlled EZ penetration subject to a specified risk tolerance. For multiple defenders, the corresponding safety constraints are combined through a smooth-minimum aggregation. The proposed guidance strategy drives the attacker toward the target when the engagement remains safe and transitions to an avoidance maneuver as the prescribed safety boundary is approached. The resulting online optimization-free guidance law directly bounds the applied lateral acceleration and initiates a bounded preemptive maneuver before corrective action is required. Closed-loop analysis establishes preservation of the prescribed risk constraints under pointwise bounded-input feasibility and finite-time target interception under a sustained target-accessibility condition. Numerical studies validate the proposed strategy and illustrate the tradeoff between allowable geometric risk and interception performance.
\end{abstract}

\begin{IEEEkeywords}
Engagement zones, dynamic maneuvering cue, input constraints, control barrier
functions, safe guidance, target interception.
\end{IEEEkeywords}

%==============================================================================
\section{Introduction}

Autonomous systems are increasingly being deployed in contested environments and must accomplish mission objectives while remaining resilient to adversarial threats \cite{10660561}. In missions such as target interception, air defense, and precision strike, guidance design must therefore account not only for terminal accuracy but also for operational constraints imposed by the defended environment. Target-Attacker-Defender (TAD) games have provided a useful framework for studying such conflicts \cite{isaacs1999differential,ratnoo2011lineofsight,
shima2011optimal}. Existing attacker-side TAD and pursuit-evasion studies have developed strategies using differential games, optimal guidance, robust control, and data-driven methods to enable target interception while avoiding defenders \cite{doi:10.2514/1.61832,doi:10.2514/1.51611, doi:10.2514/6.2025-1902,sinha2022three,11320436}. However, such approaches often rely on assumptions about defender strategies, simplified engagement dynamics, or tractable game structures that may become restrictive in adversarial settings. 

Alternatively, interception and obstacle-avoidance methods often enforce safety through minimum-distance or fixed stand-off constraints \cite{doi:10.2514/1.G003223,doi:10.2514/6.2026-0121,doi:10.2514/1.37030,doi:10.2514/1.G003157}. However, such distance-based safety constraints do not fully capture the directional nature of defender threats and can lead to conservative avoidance behavior. Geometric approaches, such as velocity-obstacle \cite{doi:10.1177/027836499801700706} and collision-cone methods \cite{709600}, use relative motion to identify velocity or geometric regions associated with future collision risk. However, these formulations
are primarily intended for collision or kinematic conflict avoidance and do not directly represent the finite interception capability of a defender. For interception scenarios involving active defenders, engagement zones offer a geometry-based threat representation that characterizes when the attacker becomes vulnerable to interception \cite{10365808,doi:10.2514/1.I011394,doi:10.2514/1.I011593}. Therefore, the EZ boundary differentiates threat levels that may arise at the same attacker-defender separation and result in less conservative safety constraints.

Although EZs reduce the conservatism of fixed stand-off constraints, the resulting unsafe region can still be conservative because penetrations with different levels of recoverability are treated equivalently. This motivates the use of the dynamic maneuvering cue (DMC) \cite{vonmoll2025dmc}, which measures recoverability in terms of the heading correction needed to restore a safe engagement condition. This enables the allowable threat exposure to be specified through a prescribed DMC tolerance, rather than by strict EZ avoidance. Extending this risk-aware formulation to practical engagements further requires that the prescribed safety maneuvers remain feasible under actuator constraints. Existing approaches account for actuator limitations either indirectly through quadratic costs or feasibility conditions \cite{RUSNAK20119349,9682938}, or by explicitly incorporating bounded acceleration or actuator saturation into the vehicle dynamics, e.g., \cite{9682938}, or through an admissibility preserving input realization (APIR) \cite{Kumar2026APC,SinhaKumar2026AsymmetricConsensus,Sinha2026DirectedSafeCoordination}. However, simultaneously enforcing DMC-based risk constraints for multiple moving defenders under finite control authority introduces additional challenges.  Therefore, this work is motivated by the need for a guidance strategy that simultaneously quantifies geometric risk, accommodates controlled EZ penetration,
handles multiple moving defenders, and respects the attacker's physical maneuvering limits without relying on defender control policies or online numerical optimization. The main contributions of this work are summarized as follows:
\begin{itemize}
    \item We develop a risk-aware intercept guidance framework that bounds defender exposure through a prescribed DMC tolerance to enables controlled penetration inside the threat zone rather than strict avoidance.

    \item We propose an online optimization-free safety-critical guidance law for multiple moving defenders that aggregates the individual risk constraints through a smooth minimum and combines a preemptive avoidance maneuver with the target-interception maneuver, while directly respecting the attacker’s lateral-acceleration bound.

    \item We establish risk-constraint and input-bound preservation under pointwise feasibility, together with finite-time target interception once sustained target-directed motion is permitted by the risk constraints.

\end{itemize}
\section{Problem Formulation}
Consider a planar engagement consisting of a mobile attacker $A$, $n$ moving defender platforms $D_i$, indexed by $\mathcal{D}=\{1,\dots,n\}$, and a stationary target $T$, as shown in Fig.~\ref{fig:enggeom}.  Each defender is modeled as a platform moving at a constant speed $v_i\ge0$ and possessing a defensive engagement capability characterized by speed $v_i^I>v_A$, range $R_i\in\R_{\ge0}$, and capture radius $c_i\in\R_{\ge0}$. Accordingly, each defender can affect the attacker within a maximum engagement distance of $R_i+c_i$ from its current position, depicted by the dotted circle around each defender in Fig.~\ref{fig:enggeom}. 
\begin{figure}[h!]
    \centering
    \includegraphics[width=.5\linewidth]{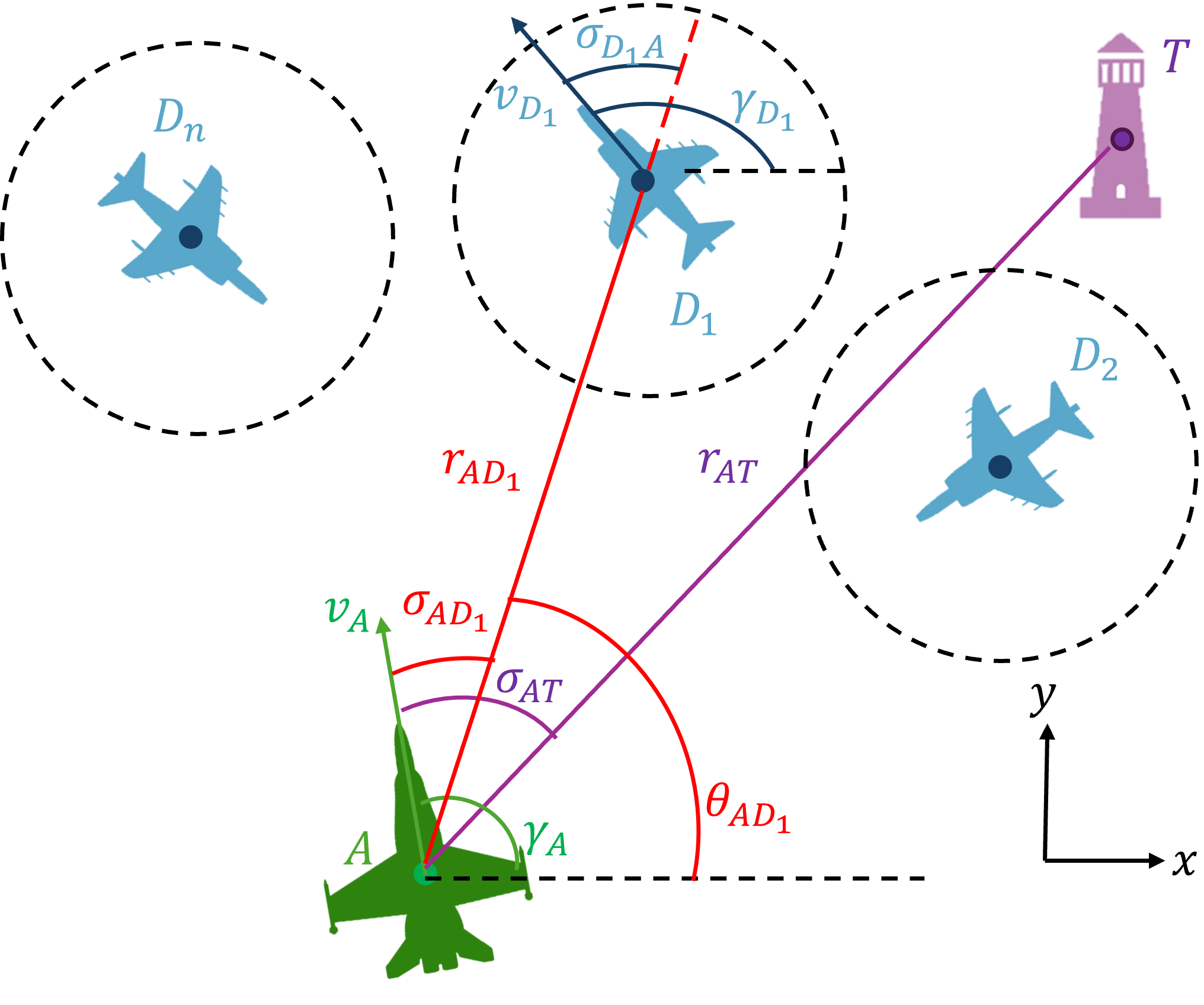}
    \caption{Attacker-Target-Defender engagement geometry.}
    \label{fig:enggeom}
\end{figure}

To describe the engagement geometry in relative coordinate frame, let attacker to the j\textsuperscript{th} entity relative distance and line-of-sight (LOS) angle be denoted by $r_{Aj}>0$ and $\theta_{Aj}$, respectively, where $j\in\mathcal{D}\cup\{T\}$. Denoting $\gamma_i \in (-\pi,\pi]$ as the heading angle of the moving entities, we define the corresponding lead as the angle between the LOS and the velocity of the vehicle given by,  $\sigma_{Aj} = \gamma_A-\theta_{Aj},
\; j\in\mathcal{D}\cup\{T\}$ and $\sigma_{iA} = \gamma_i-\theta_{Ai}, \;i\in\mathcal{D}.$
% \begin{align}
% \sigma_{Aj} &= \gamma_A-\theta_{Aj},
% \qquad j\in\mathcal{D}\cup\{T\}, \label{eqn:lead_attacker}\\
% \sigma_{iA} &= \gamma_i-\theta_{Ai},
% \qquad i\in\mathcal{D}. \label{eqn:lead_defender}
% \end{align}
It is important to note that in the above equation the defenders lead angle is defined with respect to the same attacker-to-defender LOS $\theta_{Ai}$, such that a defender moving directly toward the attacker satisfies $\sigma_{iA}=\pi$. 

Assuming the attacker to be a constant speed non-holonomic vehicle, the relative motion between the attacker and the i\textsuperscript{th} defender or target is governed by,
\begin{align}
\dot r_{Ai}
&=v_i\cos\sigma_{iA}-v_A\cos\sigma_{Ai},
\qquad ,  \label{eqn:rel_dyn_1}\\
\dot\theta_{Ai}
&=\frac{v_i\sin\sigma_{iA}-v_A\sin\sigma_{Ai}}
{r_{Ai}}, \label{eqn:rel_dyn_2}\\
\dot\sigma_{Ai}
&=\frac{a_A}{v_A}-\dot\theta_{Ai},
\qquad i\in\mathcal D\cup\{T\}, \label{eqn:rel_dyn_3}
\end{align}
where $a_A$ denotes the attacker's lateral acceleration that is the control input to the vehicle model. For the stationary target, $v_T=0$ and therefore, target heading or corresponding $\sigma_{TA}$ is not required.

The engagement capability for each defender is evaluated by considering an interceptor launched from the defender's instantaneous position. Once launched, the interceptor moves independently of the carrying platform with constant speed $v_i^I$, maximum engagement range $R_i$, and capture radius $c_i$. Thus, the motion of the defender platform changes the instantaneous launch location of the interceptor, but does not affect the motion of an interceptor after launch. This interception geometry induces a heading-dependent region of vulnerability for the attacker, referred to as the engagement zone.

\begin{definition}[Engagement Zone]
The engagement zone (EZ) of the i\textsuperscript{th} defender is defined as the set of attacker-relative states from which the attacker interception is feasible by defender, if the attacker maintains its current heading. \label{def:ez}
\end{definition}
To characterize this region, we define the attacker-to-interceptor speed ratio as,$ \mu_i\triangleq \frac{v_A}{v_i^I}\in(0,1).$ For a range-limited defender, the boundary of the corresponding EZ is given in closed form as \cite{doi:10.2514/1.I011394},
\begin{equation}
\rho_i(\sigma_{Ai})=\mu_iR_i\!\left[\cos\sigma_{Ai}
+\sqrt{\cos^2\sigma_{Ai}-1+\frac{(R_i+c_i)^2}{\mu_i^2R_i^2}}\;\right],
\label{eq:rho}
\end{equation}
Crossing this boundary places the attacker within the defender’s feasible capture region. Therefore, based on \Cref{def:ez}, the EZ of a defender $D_i$ with respect to the attacker is defined as, $\mathcal{E}_i := \left\{ (r_{Ai}, \sigma_{Ai}) \in \mathbb{R}_{\geq 0} \times (-\pi, \pi] \,\middle|\, r_{Ai} \leq \rho_i(\sigma_{Ai}) \right\}.$ and the corresponding EZ margin is defined as, $b_i=r_{Ai}-\rho_i\left(\sigma_{Ai}\right)$. Therefore, $b_i>0$, $b_i=0$, and $b_i<0$ correspond to states outside, on, and inside the EZ, respectively. The EZ boundary determines whether the attacker's current heading is admissible, but does not quantify the maneuver required to recover an admissible heading when the attacker lies inside the EZ. To capture this additional geometric information, we employ the dynamic maneuvering cue (DMC) developed in \cite{vonmoll2025dmc}.

For a fixed attacker-defender range $r_{Ai}$, the DMC represents the minimum instantaneous heading correction required to reach the nearest EZ boundary. To characterize this correction, we first identify the heading at which the attacker lies exactly on the EZ boundary. This lead angle is given as, 
\begin{align}
\sigma_i^\ast\left(r_{Ai}\right)
&=\cos^{-1}\left(\frac{r_{Ai}^2+\mu_i^2R_i^2-(R_i+c_i)^2}
{2\mu_iR_i r_{Ai}}\right)
,
\end{align}
obtained by solving $ \rho_i(\sigma_i^\ast)=r_{Ai} $. The above boundary lead angle exists only within the annular region $\underline r_i\le r_{Ai}\le\overline r_i,$ where$ \underline r_i=(1-\mu_i)R_i+c_i,\qquad \overline r_i=(1+\mu_i)R_i+c_i.$ Within this region, the EZ boundary satisfies
$|\sigma_{Ai}|=\sigma_i^\ast(r_{Ai}),$
and the minimum heading correction required to reach the nearest admissible heading is therefore \(\sigma_i^\ast(r_{Ai})-|\sigma_{Ai}|\) whenever the attacker lies inside the EZ. 
\begin{definition}[Dynamic Maneuvering Cue]
\label{def:dmc}
For defender $D_i$, the dynamic maneuvering cue (DMC) is the minimum instantaneous heading correction required for the attacker to reach the nearest EZ-admissible heading. It is given as \cite{vonmoll2025dmc},
\begin{equation}
|\DMC_i|\eqdef
\begin{cases}
+\infty, & r_{Ai}<\underline r_i,\\
\max\{0,\sigma_i^\ast(r_{Ai})-|\sigma_{Ai}|\},
 & \underline r_i\le r_{Ai}\le\overline r_i,\\
0, & r_{Ai}>\overline r_i.
\end{cases}
\label{eq:dmc}
\end{equation}
\end{definition}
Thus, $|\mathrm{DMC}_i|=0$ indicates that the current attacker heading is already EZ-admissible, while a finite positive value quantifies the minimum heading correction required to reach the nearest EZ boundary. For $r_{Ai}<\underline r_i$, no admissible heading exists and the DMC is therefore taken as unbounded.

We interpret $|\DMC_i|$ as a deterministic measure of geometric risk, since it quantifies the minimum instantaneous heading correction required to recover EZ admissibility. Accordingly, for each defender $D_i$, we prescribe a risk tolerance $\varepsilon_i\in[0,\pi)$ and impose
\begin{align}
|\DMC_i(t)|\leq \varepsilon_i.
\label{eq:dmc_constraint}
\end{align}
This constraint permits controlled penetration into an EZ provided that the heading correction required to recover an admissible state remains within the prescribed risk tolerance. The choice $\varepsilon_i=0$ recovers strict EZ avoidance, whereas $\varepsilon_i>0$ allows increasing geometric exposure while bounding the required recovery maneuver.

Although \eqref{eq:dmc_constraint} provides a direct bound on the required heading correction, its piecewise angular form is not convenient for subsequent control design. We therefore seek an equivalent representation in terms of the EZ boundary. For each defender, define
\begin{align}
\chi_i(\sigma_{Ai};\varepsilon_i)
&\triangleq
\min\left\{|\sigma_{Ai}|+\varepsilon_i,\pi\right\},
\end{align}
and the corresponding radial margin
\begin{align}
B_i(r_{Ai},\sigma_{Ai};\varepsilon_i)
&\triangleq
r_{Ai}
-\rho_i\left(
\chi_i(\sigma_{Ai};\varepsilon_i)
\right).
\label{eq:Bi}
\end{align}
The quantity $\chi_i$ represents the attacker lead angle augmented by the prescribed risk tolerance, while $B_i$ measures the radial clearance from the corresponding EZ boundary.

\begin{lemma}
\label{lem:dmc_radial}
Enforcing the non-negative radial margin in \eqref{eq:Bi} is equivalent to enforcing the DMC constraint in \eqref{eq:dmc_constraint}.
\end{lemma}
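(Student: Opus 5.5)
The plan is to reduce everything to one structural fact about the EZ boundary \eqref{eq:rho}: on $[0,\pi]$ the map $\sigma\mapsto\rho_i(\sigma)$ is continuous and strictly decreasing, with $\rho_i(0)=\overline r_i$ and $\rho_i(\pi)=\underline r_i$. Since $\rho_i$ depends on $\sigma_{Ai}$ only through $\cos\sigma_{Ai}$, it is even, so only $|\sigma_{Ai}|$ matters. Also $\chi_i\in[0,\pi]$ by construction. Granting this fact, $\rho_i$ is a bijection from $[0,\pi]$ onto $[\underline r_i,\overline r_i]$. Its inverse there is $\sigma_i^\ast$, which is consistent with $\sigma_i^\ast$ being obtained by solving $\rho_i(\sigma_i^\ast)=r_{Ai}$.

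\textbf{Step 1: the monotonicity fact (main obstacle).} I expect this to be the only non-routine step. Write $K_i\eqdef (R_i+c_i)^2/(\mu_i^2R_i^2)$. Differentiating gives
\[
\rho_i'(\sigma)=-\mu_iR_i\sin\sigma\left[1+\frac{\cos\sigma}{\sqrt{\cos^2\sigma-1+K_i}}\right].
\]
Because $\mu_i<1$ and $c_i\ge0$, we have $K_i>1$. Hence $\sqrt{\cos^2\sigma-1+K_i}>|\cos\sigma|$, which makes the bracket strictly positive. Therefore $\rho_i'<0$ on $(0,\pi)$. For the endpoint values, substitute $\cos\sigma=\pm1$; the square root then becomes $(R_i+c_i)/(\mu_iR_i)$, giving $\rho_i(0)=(1+\mu_i)R_i+c_i$ and $\rho_i(\pi)=(1-\mu_i)R_i+c_i$. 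If $R_i=0$ some care is needed, and I would treat it as a degenerate case where $\rho_i$ is constant and the annulus collapses.

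\textbf{Step 2: case split following \eqref{eq:dmc}.}
\begin{itemize}
\item \emph{Case $r_{Ai}<\underline r_i$.} Here $|\DMC_i|=+\infty>\varepsilon_i$, so the DMC constraint fails. Since $\rho_i(\chi_i)\ge\underline r_i$, we get $B_i\le r_{Ai}-\underline r_i<0$, so the radial constraint fails too.
\item \emph{Case $r_{Ai}>\overline r_i$.} Here $|\DMC_i|=0\le\varepsilon_i$, so the DMC constraint holds. Since $\rho_i(\chi_i)\le\overline r_i$, we get $B_i\ge r_{Ai}-\overline r_i>0$, so the radial constraint holds too.
\item \emph{Case $\underline r_i\le r_{Ai}\le\overline r_i$.} Write $r_{Ai}=\rho_i(\sigma_i^\ast(r_{Ai}))$. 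By strict monotonicity, $B_i\ge0$ holds iff $\rho_i(\chi_i)\le\rho_i(\sigma_i^\ast)$, i.e., iff $\chi_i\ge\sigma_i^\ast$. Since $\sigma_i^\ast\le\pi$, the condition $\min\{|\sigma_{Ai}|+\varepsilon_i,\pi\}\ge\sigma_i^\ast$ is equivalent to $|\sigma_{Ai}|+\varepsilon_i\ge\sigma_i^\ast$. That is, $\sigma_i^\ast-|\sigma_{Ai}|\le\varepsilon_i$. Because $\varepsilon_i\ge0$, this is in turn equivalent to $\max\{0,\sigma_i^\ast-|\sigma_{Ai}|\}\le\varepsilon_i$, which is exactly \eqref{eq:dmc_constraint}.
\end{itemize}

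The three cases together establish $B_i\ge0\iff|\DMC_i|\le\varepsilon_i$ pointwise in time. Applying this at every $t$ gives the claimed equivalence of the two constraints.
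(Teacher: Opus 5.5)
Your proposal is correct and follows essentially the same route as the paper's proof. Both use the same three-regime case split and rely on the strict decrease of $\rho_i$ on $[0,\pi]$ with $\rho_i(0)=\overline r_i$, $\rho_i(\pi)=\underline r_i$, together with $\rho_i(\sigma_i^\ast)=r_{Ai}$ and $\sigma_i^\ast\le\pi$ to absorb the $\min\{\cdot,\pi\}$ in $\chi_i$. Your additions are verifying that monotonicity through $\rho_i'$ (which the paper only asserts) and spelling out the $\max\{0,\cdot\}$ step using $\varepsilon_i\ge0$.
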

\begin{proof}
From \eqref{eq:rho}, $\rho_i(\cdot)$ is strictly decreasing on
$[0,\pi]$, with
$\rho_i(0)=\overline r_i$ and
$\rho_i(\pi)=\underline r_i$. For $\underline r_i\leq r_{Ai}\leq\overline r_i$, the DMC definition
in \eqref{eq:dmc} gives $
\sigma_i^\ast(r_{Ai})
\leq |\sigma_{Ai}|+\varepsilon_i $.
Since $\sigma_i^\ast(r_{Ai})\leq\pi$, this is equivalently, $\sigma_i^\ast(r_{Ai})
\leq \chi_i(\sigma_{Ai};\varepsilon_i).$
Using the monotonicity of $\rho_i(\cdot)$ and $\rho_i(\sigma_i^\ast)=r_{Ai}$, the above condition is equivalent to $
r_{Ai}\geq
\rho_i\!\left(\chi_i(\sigma_{Ai};\varepsilon_i)\right),$ which, from \eqref{eq:Bi}, is precisely $B_i\geq0$. For $r_{Ai}>\overline r_i$, \eqref{eq:dmc} gives
$|\DMC_i|=0$, while
$\rho_i(\chi_i)\leq\rho_i(0)=\overline r_i<r_{Ai}$, and hence
$B_i>0$. Conversely, for $r_{Ai}<\underline r_i$,
$|\DMC_i|=+\infty$, while
$\rho_i(\chi_i)\geq\rho_i(\pi)=\underline r_i>r_{Ai}$, yielding
$B_i<0$. Therefore, $B_i\geq0$ is equivalent to
$|\DMC_i|\leq\varepsilon_i$ in all three range regimes. This concludes the proof.
\end{proof}

Lemma~\ref{lem:dmc_radial} allows the DMC constraint to be represented by the radial condition $B_i\geq0$, which is more convenient for subsequent guidance design. In particular, when $\varepsilon_i=0$, $ B_i(r_{Ai},\sigma_{Ai};0) = r_{Ai}-\rho_i(\sigma_{Ai}) = b_i, $ and the original EZ-avoidance condition is recovered. For \(\varepsilon_i>0\), the admissible region is enlarged to include states inside the EZ for which the required recovery turn remains within the prescribed DMC budget.

\begin{problem}
\label{prob:guidance}
For prescribed risk tolerances $\varepsilon_i\in[0,\pi)$, design a guidance law for the attacker to ensure that the attacker reaches a prescribed target capture radius $r_c>0$ in finite time, i.e., there exists $t_f<\infty$ such that $r_{AT}(t_f)\leq r_c$. During the engagement, the attacker must maintain safety by  ensuring DMC constraint with respect to every defender, that is, $B_i(r_{Ai}(t),\sigma_{Ai}(t);\varepsilon_i)\geq0$, $\forall i\in\mathcal{D}$ and $t\in[0,t_f]$. The attacker must also satisfy the lateral-acceleration constraint $|a_A(t)|<a_{\max}$ for all $t\in[0,t_f]$. 
\end{problem}

\section{Risk-aware Guidance Design}
This section develops the proposed multi-defender risk-aware guidance law and incorporates the acceleration constraint into the control design to establish the resulting safety and interception properties. The dependence of $B_i$ on $|\sigma_{Ai}|$ renders the margin nonsmooth at the head-on configuration. Therefore, we select a fixed direction $s_i\in\{-1,1\}$ at the beginning of the engagement and retain it throughout the engagement. This yields the following smooth surrogate of the original risk margin,
\begin{align}
\widehat B_i = r_{Ai}-\rho_i\!\left(\varepsilon_i+s_i\sigma_{Ai}\right).
\label{eq:Bi_hat}
\end{align}

\begin{lemma}
\label{lem:conservative_branch}
For every fixed $s_i\in\{-1,1\}$, the surrogate margin in \eqref{eq:Bi_hat} satisfies $\widehat B_i\leq B_i$. 
\end{lemma}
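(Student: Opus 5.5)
Since both margins share the term $r_{Ai}$, the claim $\widehat B_i\leq B_i$ is equivalent to
\[
\rho_i\!\left(\varepsilon_i+s_i\sigma_{Ai}\right)\;\geq\;\rho_i\!\left(\chi_i(\sigma_{Ai};\varepsilon_i)\right),
\qquad \chi_i=\min\{|\sigma_{Ai}|+\varepsilon_i,\pi\}.
\]
The plan is to reduce both sides to values of $\rho_i$ on $[0,\pi]$ and then use the strict decrease of $\rho_i$ there, as in the proof of Lemma~\ref{lem:dmc_radial}.

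The first step records two structural facts about \eqref{eq:rho}. First, $\rho_i$ depends on its argument only through $\cos(\cdot)$, so it is even and $2\pi$-periodic. Second, the radicand is at least $(R_i+c_i)^2/(\mu_i^2R_i^2)-1>0$, so $\rho_i$ is well defined and smooth for every real argument; this is what makes $\widehat B_i$ meaningful even when $\varepsilon_i+s_i\sigma_{Ai}$ falls outside $(-\pi,\pi]$. Writing $\phi=\varepsilon_i+s_i\sigma_{Ai}$, I will set $\bar\phi\in[0,\pi]$ to be the unique angle with $\cos\bar\phi=\cos\phi$, namely the wrapped absolute value $\bar\phi=\arccos(\cos\phi)$. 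Then $\rho_i(\phi)=\rho_i(\bar\phi)$. Since $\chi_i\in[0,\pi]$ and $\rho_i$ is decreasing on $[0,\pi]$, it suffices to prove $\bar\phi\leq\chi_i$.

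To show $\bar\phi\leq\chi_i$, note that $s_i\sigma_{Ai}\in[-|\sigma_{Ai}|,|\sigma_{Ai}|]$, so
\[
\phi\in\left[\varepsilon_i-|\sigma_{Ai}|,\ \varepsilon_i+|\sigma_{Ai}|\right]\subset(-\pi,2\pi),
\]
using $\varepsilon_i\in[0,\pi)$ and $|\sigma_{Ai}|\leq\pi$. I will split into two cases.

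\textbf{Case $\phi\in[-\pi,\pi]$.} Here $\bar\phi=|\phi|\leq\varepsilon_i+|\sigma_{Ai}|$ by the triangle inequality, and $\bar\phi\leq\pi$ trivially, so $\bar\phi\leq\chi_i$.

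\textbf{Case $\phi\in(\pi,2\pi)$.} Here $\bar\phi\leq\pi$. In this case $|\sigma_{Ai}|+\varepsilon_i\geq\phi>\pi$, so $\chi_i=\pi$ and the inequality $\bar\phi\leq\chi_i$ holds.

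Monotonicity then gives $\rho_i(\phi)\geq\rho_i(\chi_i)$, hence $\widehat B_i\leq B_i$, for either choice of $s_i$. The inequality is an equality when $s_i=\operatorname{sign}(\sigma_{Ai})$ and $|\sigma_{Ai}|+\varepsilon_i\leq\pi$, which supports the reading of $\widehat B_i$ as a conservative branch.

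The main obstacle I expect is the wrap-around of $\varepsilon_i+s_i\sigma_{Ai}$ beyond $\pi$, which is exactly what the saturation at $\pi$ in $\chi_i$ handles. I will handle it through the evenness and periodicity of $\rho_i$, rather than assuming the argument always stays in $[-\pi,\pi]$. I also need to justify monotonicity on $[0,\pi]$ once more as a fact about $\rho_i$ itself. That follows because $\rho_i$ is an increasing function of $c=\cos\sigma$: its derivative in $c$ is $1+c/\sqrt{c^2+K}$ with $K>0$, which is positive. Since $\cos$ is decreasing on $[0,\pi]$, $\rho_i$ is decreasing there.
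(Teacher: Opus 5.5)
Your proof is correct and follows essentially the same route as the paper. Both reduce $\rho_i(\varepsilon_i+s_i\sigma_{Ai})$ to an argument in $[0,\pi]$ via evenness and $2\pi$-periodicity, then use $|\varepsilon_i+s_i\sigma_{Ai}|\le\varepsilon_i+|\sigma_{Ai}|$ with monotonic decrease on $[0,\pi]$, handling the saturated case $\chi_i=\pi$ through the minimality of $\rho_i(\pi)$; the paper splits on whether $\varepsilon_i+|\sigma_{Ai}|\le\pi$ rather than on where $\phi$ lies, which is immaterial. One trivial slip: $\phi=-\pi$ is attainable (e.g.\ $\varepsilon_i=0$, $\sigma_{Ai}=\pi$, $s_i=-1$), so the range is $[-\pi,2\pi)$ rather than $(-\pi,2\pi)$, but your first case already covers the closed interval $[-\pi,\pi]$.
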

\begin{proof}
For either $s_i\in{-1,1}$, $|\varepsilon_i+s_i\sigma_{Ai}|
\leq \varepsilon_i+|\sigma_{Ai}|$. If $\varepsilon_i+|\sigma_{Ai}|\leq\pi$, then $|\varepsilon_i+s_i\sigma_{Ai}|\leq\chi_i$. Evenness and monotonic decrease of $\rho_i$ on $[0,\pi]$ give $\rho_i(\varepsilon_i+s_i\sigma_{Ai})\geq\rho_i(\chi_i)$. If $\varepsilon_i+|\sigma_{Ai}|>\pi$, then $\chi_i=\pi$, and evenness and $2\pi$-periodicity make $\rho_i(\pi)$ the global minimum, so the same inequality holds. Hence, $ r_{Ai}-\rho_i(\varepsilon_i+s_i\sigma_{Ai})
\leq r_{Ai}-\rho_i(\chi_i)$, that proves $\widehat B \leq B$
\end{proof} 
Combining \Cref{lem:dmc_radial} with \Cref{lem:conservative_branch}, enforcing $\widehat B_i\geq0$ is sufficient to guarantee the original DMC constraint in \eqref{eq:dmc_constraint}. Differentiating \eqref{eq:Bi_hat} with respect to time and using \eqref{eqn:rel_dyn_1}-\eqref{eqn:rel_dyn_3}, we obtain,
\begin{align}
\dot{\widehat B}_i =
f_i+g_i\frac{a_A}{v_A},
\label{eq:Bi_hat_dyn}
\end{align}
where $g_i=-s_i\rho_i'\!\left(\varepsilon_i+s_i\sigma_{Ai}\right)$ and $f_i=\dot r_{Ai}-g_i\dot\theta_{Ai}$.

To account for multiple defenders, we aggregate the conservative margins using the smooth minimum as,
\begin{align}
H
\triangleq
-\beta\log\left(
\sum_{i\in\mathcal D}e^{-\widehat B_i/\beta}
\right),
\qquad \beta>0,
\label{eq:H_hat}
\end{align}
that satisfies, $
\min_{i\in\mathcal D}\widehat B_i-\beta\log n
\leq H \leq \min_{i\in\mathcal D}\widehat B_i.$
The parameter $\beta$ controls the smoothness and conservatism of the aggregation. A smaller $\beta$ gives a tighter approximation of $\min_{i\in\mathcal D}\widehat B_i$ but produces a sharper transition between the dominant defenders, whereas a larger $\beta$ provides a smoother distribution of influence among defenders at the expense of the larger approximation gap $\beta\log n$.
Thus, $H\geq0$ guarantees $\widehat B_i\geq0$ for every defender, preserving all the original DMC risk constraints. Differentiating \eqref{eq:H_hat} with respect to time and using \eqref{eq:Bi_hat_dyn} we obtain,
\begin{align}
\dot{H}=F+G\frac{a_A}{v_A},
\label{eq:Hhat_affine}
\end{align}
where $F=\sum_{i\in\mathcal D}w_if_i$, $
G=\sum_{i\in\mathcal D}w_ig_i$, with
$w_i=e^{-\widehat B_i/\beta}/\sum_{j\in\mathcal D}e^{-\widehat B_j/\beta}$. To impose the lateral acceleration constraint, we define the admissible input set as,
\begin{align}
\mathcal U\triangleq[-\bar a,\bar a],
\qquad 0<\bar a<a_{\max}.
\label{eq:direct_input_set}
\end{align}
For a candidate acceleration $a_A\in\mathcal U$, define the risk residual,
\begin{align}
\mathcal R(a_A)=
F+G\frac{a_A}{v_A}+\kappa H,
\qquad \kappa>0.
\label{eq:direct_residual}
\end{align}
Thus, $\mathcal R(a_A)\geq0$ implies
$\dot{H}\geq-\kappa H$. Since
\eqref{eq:direct_residual} is affine in $a_A$, its maximum over $\mathcal U$ is
\begin{align}
\mathcal M(x,t)
&\triangleq\max_{|a_A|\leq\bar a}\mathcal R(a_A)=F+\kappa H
 +\frac{|G|}{v_A}\bar a.
\label{eq:direct_feasibility}
\end{align}
As a result, an admissible acceleration satisfying the risk inequality exists if and only if $\mathcal M(x,t)\geq0$.

We now propose the target interception command as,
\begin{align}
a_A^T
=-\bar a\tanh\!\left(
\frac{(v_AK_I+v_A^2/r_T^c)\sin\sigma_{AT}}{\bar a}
\right),
\label{eq:direct_target}
\end{align}
where $K_I>0$ the controller gain and $r_T^c=\max\{r_{AT},r_c\}$ avoids singular LOS compensation term near the capture radius. The command satisfies $|a_A^T|<\bar a$ and steers the attacker heading toward the attacker to target LOS. 

To avoid violating the risk constraints, we first introduce a bounded preemptive maneuver,
\begin{align}
a_A^p
=\ell a_p\psi\!\left(\frac{|G|}{g_p}\right),
\qquad 0<a_p\leq\bar a,
\label{eq:direct_preemptive}
\end{align}
$g_p\geq g_\ast>0$ denotes the turning-authority threshold that governs the attenuation of the preemptive maneuver, $\ell\in\{-1,1\}$ denotes the escape direction and $\psi(\cdot)$ is a smooth tapering function defined as,
\begin{align}
\psi(z)=
\begin{cases}
1-3z^2+2z^3,&0\leq z<1,\\
0,&z\geq1.
\end{cases}
\label{eq:direct_taper}
\end{align}
Additionally we define, $d>0$ as a user-selected anticipation margin that activates the maneuver when $\mathcal R(a_A^T)<d$. A larger $d$ initiates avoidance farther from the risk boundary, whereas a smaller $d$ retains the target command closer to that boundary. Whenever $\mathcal R(a_A^T)<d$, $\ell\in{-1,1}$ is chosen as the value that yields the larger $\mathcal{R}(a_A^p)$ and is retained for the corresponding avoidance maneuver, until $\mathcal R(a_A^T)\geq d$.

\begin{remark}
For each defender, the sign of $s_i$ is chosen as the value that yields the larger $\widehat B_i$ at $t=0 $ and retained throughout the engagement. In contrast, $\ell$ is selected whenever the anticipation region is entered and retained only for the corresponding avoidance maneuver. It is important to note that $s_i$ only determines the smooth surrogate branch used for risk evaluation and does not impose the physical turning direction, the latter is selected online through $\ell$ whenever the avoidance maneuver is activated.
\end{remark}

We now propose the overall risk-preserving acceleration as,
\begin{align}
a_A^b=
\begin{cases}
a_A^p,
&\mathcal R(a_A^T)<d,\ \mathcal{R}(a_A^p)\geq0,\\[1mm]
a_A^p-\dfrac{v_A \mathcal{R}(a_A^p)}{G},
&\mathcal R(a_A^T)<d,\ \mathcal{R}(a_A^p)<0,
\end{cases}
\label{eq:direct_risk_command}
\end{align}
The first case represents the scenario when the preemptive command already satisfies the risk inequality. Otherwise, the second case provides the correction required to enforce $\mathcal R(a_A^b)=0$. For $G\neq0$, the two branches coincide at $\mathcal \mathcal{R}(a_A^p)=0$, while feasibility ensures that the corrective command remains admissible.

We now propose the lateral acceleration as a convex combination of the target interception and risk-preserving command,
\begin{align}
a_A=\alpha a_A^b+(1-\alpha)a_A^T,
\label{eq:direct_acceleration}
\end{align}
where $\alpha$ denotes a smooth switching function given as,
\begin{align}
\alpha=
\begin{cases}
1,&\mathcal R(a_A^T)\leq0,\\
1-3\xi^2+2\xi^3,
&0<\mathcal R(a_A^T)<d,\\
0,&\mathcal R(a_A^T)\geq d,
\end{cases}
\qquad
\xi\triangleq\frac{\mathcal R(a_A^T)}{d}.
\label{eq:direct_switching}
\end{align}
Thus, the controller transitions smoothly from pure target-interception guidance to pure risk-preserving guidance as $\mathcal R(a_A^T)$ decreases from $d$ to $0$. The cubic interpolation has zero slope at both endpoints, reducing sensitivity to variations in the inactive command near the switching boundaries.

\begin{theorem}
\label{thm:direct_safety}
Consider any relative target-attacker-defender kinematics, if attacker applies lateral acceleration \eqref{eq:direct_acceleration}, such that $H(0)\geq0$ and the pointwise feasibility condition $\mathcal M(x(t),t)\geq0$ holds almost everywhere throughout the engagement. Then the prescribed DMC constraints are preserved, and the lateral acceleration remains within the admissible bounds.
\end{theorem}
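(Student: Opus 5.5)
The argument has two parts. First we show that the closed-loop command keeps the risk residual nonnegative whenever $H\le 0$; a comparison argument on $\dot H\ge-\kappa H$ then gives forward invariance of $\{H\ge0\}$, and Lemmas~\ref{lem:conservative_branch} and \ref{lem:dmc_radial} transfer this to the DMC constraints. Second, we show $|a_A|\le\bar a<a_{\max}$ pointwise. The key structural fact is that $\mathcal R(\cdot)$ is affine in $a_A$. Combined with the convex combination \eqref{eq:direct_acceleration}, this gives
\begin{equation*}
\mathcal R(a_A)=\alpha\,\mathcal R(a_A^b)+(1-\alpha)\,\mathcal R(a_A^T).
\end{equation*}
So we only need sign information on each branch.

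We split into the three regimes of \eqref{eq:direct_switching}.
\begin{itemize}
\item If $\mathcal R(a_A^T)\ge d$, then $\alpha=0$ and $\mathcal R(a_A)=\mathcal R(a_A^T)\ge d>0$.
\item If $\mathcal R(a_A^T)<d$, the command $a_A^b$ is active. In its first branch $\mathcal R(a_A^b)=\mathcal R(a_A^p)\ge0$. In its second branch, the affine form of $\mathcal R$ gives $\mathcal R(a_A^b)=\mathcal R(a_A^p)-\mathcal R(a_A^p)=0$.
\item Hence for $0<\mathcal R(a_A^T)<d$, both terms are nonnegative and $\mathcal R(a_A)\ge0$. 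For $\mathcal R(a_A^T)\le0$ we have $\alpha=1$ and $\mathcal R(a_A)=\mathcal R(a_A^b)\ge0$.
\end{itemize}
Thus $\dot H=\mathcal R(a_A)-\kappa H\ge-\kappa H$ almost everywhere. Comparison (or Grönwall applied to $\frac{d}{dt}(e^{\kappa t}H)\ge0$) gives $H(t)\ge e^{-\kappa t}H(0)\ge0$. From the smooth-minimum bound $H\le\min_i\widehat B_i$ we get $\widehat B_i\ge0$. Lemma~\ref{lem:conservative_branch} then gives $B_i\ge0$, and Lemma~\ref{lem:dmc_radial} gives $|\DMC_i|\le\varepsilon_i$ for all $i$.

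For the input bound we use $|a_A^T|<\bar a$ from the $\tanh$ saturation, and $|a_A^p|\le a_p\le\bar a$ since $0\le\psi\le1$. Since a convex combination of points of $\mathcal U$ stays in $\mathcal U$, it suffices to show $|a_A^b|\le\bar a$. The only nontrivial case is the corrective branch with $\mathcal R(a_A^p)<0$.

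\textbf{Main obstacle: the corrective branch.} Here $a_A^b$ is the unique root of the affine map $a\mapsto\mathcal R(a)$, which requires $G\ne0$. The plan is to argue as follows.
\begin{itemize}
\item If $G=0$, then $\mathcal R$ is constant, equal to $\mathcal M\ge0$, so the case $\mathcal R(a_A^p)<0$ cannot occur. This holds wherever feasibility holds.
\item If $G\ne0$, then $\mathcal R(a_A^p)<0\le\mathcal M=\max_{\mathcal U}\mathcal R$, and $a_A^p\in\mathcal U$. By the intermediate value theorem on the segment from $a_A^p$ to the maximizer $\bar a\,\mathrm{sgn}(G)$, the root lies in $\mathcal U$. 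Hence $|a_A^b|\le\bar a$.
\end{itemize}

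Two technical points remain. First, the feasibility hypothesis holds only almost everywhere. On the measure-zero exceptional set the command could be ill-defined, but this does not affect the integral comparison argument. The input bound there is handled by noting that the statement is an a.e. statement, or by continuity of the closed-loop signals. Second, $\bar a<a_{\max}$ gives the strict inequality $|a_A|<a_{\max}$ required in Problem~\ref{prob:guidance}.
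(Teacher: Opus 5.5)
Your proposal is correct and follows essentially the same route as the paper. Both arguments show $\mathcal R(a_A^b)\ge0$ branchwise, ruling out $G=0$ in the corrective branch via $\mathcal M\ge0$ and locating the root between $a_A^p$ and a feasible input; use affinity of $\mathcal R$ to get $\mathcal R(a_A)\ge0$ in all three $\alpha$-regimes; apply the comparison principle to $\dot H\ge-\kappa H$; pass through $H\le\min_i\widehat B_i$ and Lemmas~\ref{lem:conservative_branch} and \ref{lem:dmc_radial}; and obtain the input bound from convexity of $\mathcal U$. (One wording fix: your opening says the residual is nonnegative ``whenever $H\le0$,'' but your argument, like the paper's, actually establishes $\mathcal R(a_A)\ge0$ at every state, which is what the comparison step uses.)
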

\begin{proof}
The first case of
\eqref{eq:direct_risk_command} gives
$\mathcal R(a_A^b)=\mathcal \mathcal{R}(a_A^p)\geq0$ and $|a_A^b|\leq\bar a$ by design of the preemptive maneuver. Consider now the second case of \eqref{eq:direct_risk_command}, for which $\mathcal \mathcal{R}(a_A^p)<0$. If
$G=0$, then $\mathcal R(u)=\mathcal \mathcal{R}(a_A^p)<0$ for every
$u\in\mathcal U$, contradicting $\mathcal M\geq0$. Hence
$G\neq0$, and direct substitution gives
\begin{align}
\mathcal R(a_A^b)
=\mathcal \mathcal{R}(a_A^p)+\frac{G}{v_A}
\left(-\frac{v_A \mathcal \mathcal{R}(a_A^p)}{G}\right)=0.
\label{eq:direct_corrective_residual}
\end{align}
Moreover, $\mathcal M\geq0$ guarantees the existence of some $u\in\mathcal U$ with $\mathcal R(u)\geq0$. Since $\mathcal R$ is affine and $\mathcal \mathcal{R}(a_A^p)<0$, its zero lies between $a_A^p$ and such an admissible input. Therefore, $a_A^b\in\mathcal U$.

If $\mathcal R(a_A^T)\geq d$, then $\alpha=0$, $a_A=a_A^T \in \mathcal{U}$, and
$\mathcal R(a_A)\geq d$. If $\mathcal R(a_A^T)\leq0$, then
$\alpha=1$, $a_A=a_A^b$, and $\mathcal R(a_A)\geq0$. In the transition
region, affinity of \eqref{eq:direct_residual} gives
\begin{align}
\mathcal R(a_A)
=\alpha\mathcal R(a_A^b)
 +(1-\alpha)\mathcal R(a_A^T)\geq0.
\label{eq:direct_blended_residual}
\end{align}
Consequently, \eqref{eq:direct_residual} and
\eqref{eq:direct_blended_residual} yield
$\dot{H}\geq-\kappa H$ almost everywhere. The comparison
principle for absolutely continuous functions gives
$H(t)\geq H(0)e^{-\kappa t}\geq0$. Since $H\leq\min_{i\in\mathcal D}\widehat B_i$, it follows that $\widehat B_i(t)\geq0$ for every defender. By \Cref{lem:conservative_branch}, $B_i(t)\geq\widehat B_i(t)\geq0$, which in turn implies guaranteed preservation of all prescribed DMC constraints using \Cref{lem:dmc_radial}. Since both $a_A^T$ and $a_A^b$ belong to $\mathcal U$ and $\alpha\in[0,1]$, convexity of $\mathcal U$ yields $|a_A|\leq\bar a<a_{\max}.$ This concludes the proof.
\end{proof}

\begin{figure*}[h!]
\centering
\begin{subfigure}[t]{0.245\linewidth}
\centering
\includegraphics[width=\linewidth]{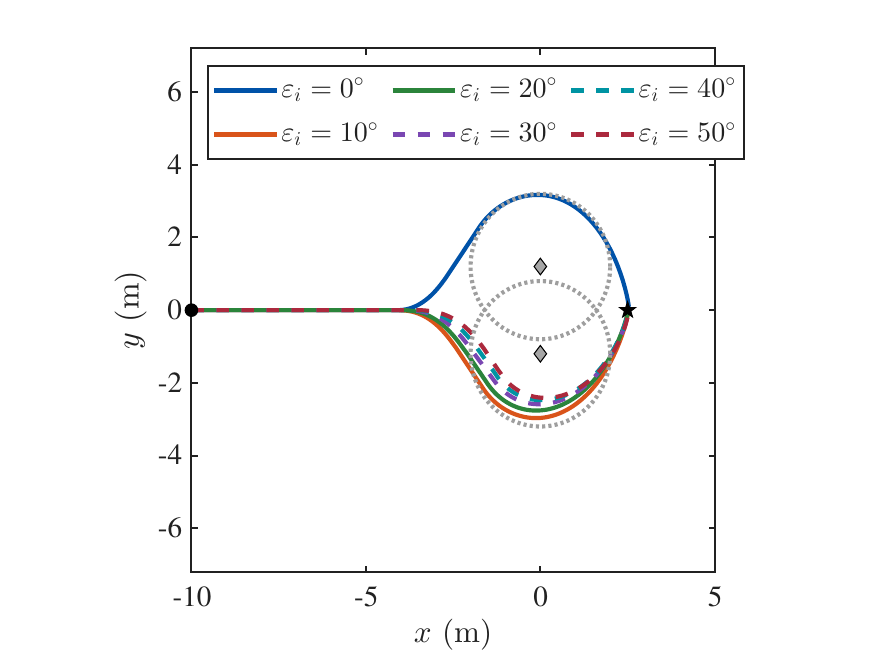}
\caption{Trajectories.}
\label{fig:2a}
\end{subfigure}\hfill
\begin{subfigure}[t]{0.245\linewidth}
\centering
\includegraphics[width=\linewidth]{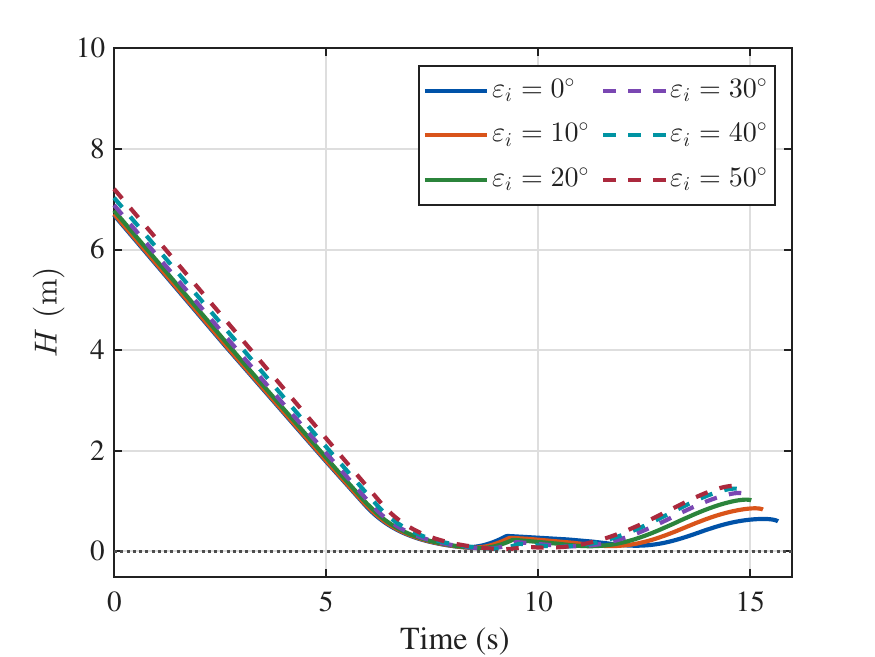}
\caption{Safety constraints.}
\label{fig:2b}
\end{subfigure}\hfill
\begin{subfigure}[t]{0.245\linewidth}
\centering
\includegraphics[width=\linewidth]{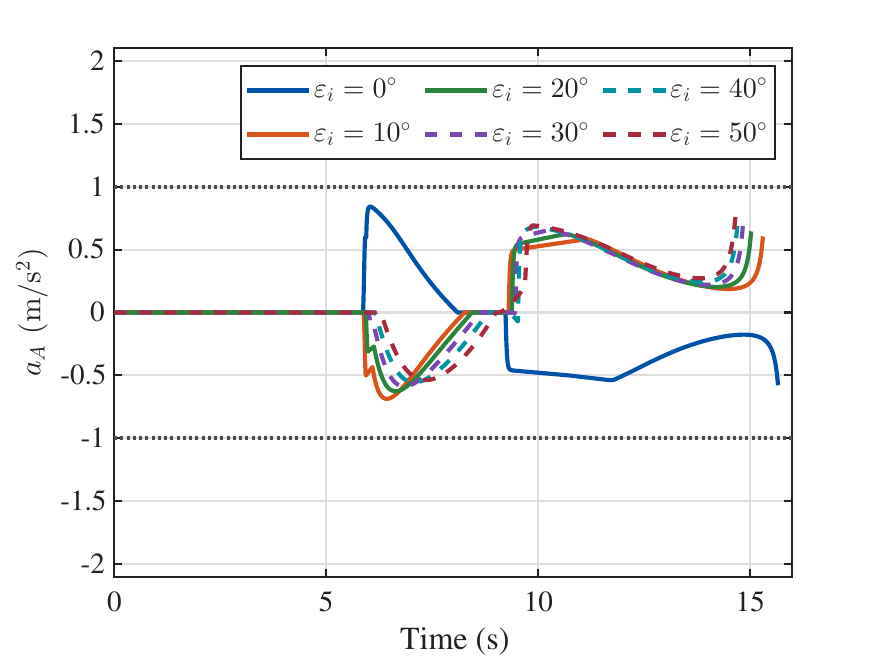}
\caption{Control input.}
\label{fig:2c}
\end{subfigure}\hfill
\begin{subfigure}[t]{0.245\linewidth}
\centering
\includegraphics[width=\linewidth]{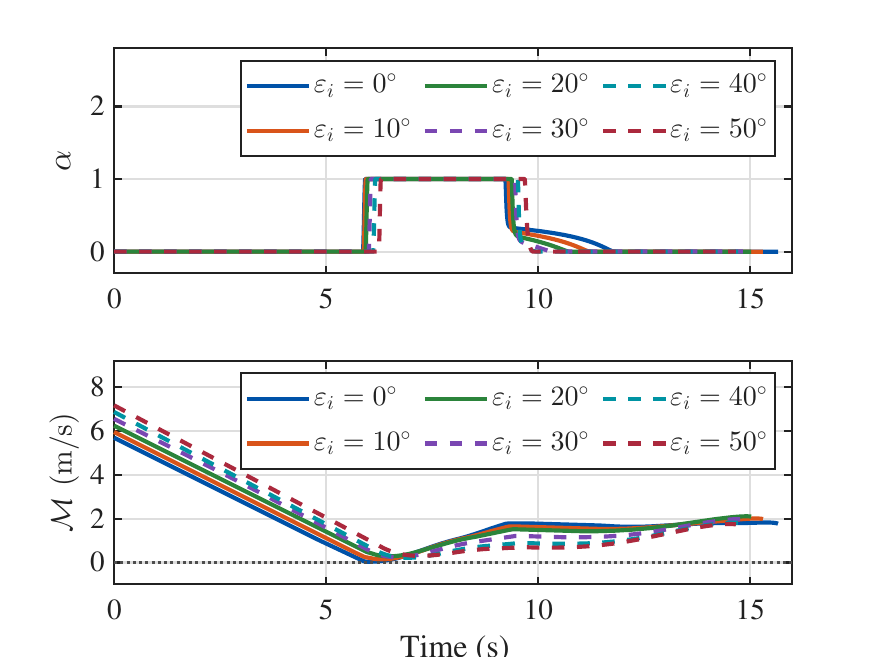}
\caption{Switching function and input feasibility.}
\label{fig:2d}
\end{subfigure}
\caption{Comparison of attacker performance with two stationary defenders.}
\label{fig:stationary_initial_error}
\end{figure*}

\begin{corollary}
\label{cor:direct_capture}
Suppose the conditions of Theorem~\ref{thm:direct_safety} hold and, from some $t_0$ until capture, $\alpha=0$ and
$r_{AT}(t_0)>r_c$. If there exists $\sigma_c\in[0,\pi/2)$ such that $|\sigma_{AT}(t_0)|\le\sigma_c$ and 
\begin{align}
\bar a\tanh\!\left(
\frac{(v_AK_I+v_A^2/r_c)\sin\sigma_c}{\bar a}
\right)
\geq\frac{v_A^2}{r_c}\sin\sigma_c,
\label{eq:direct_capture_condition}
\end{align}
then the attacker reaches $r_{AT}=r_c$ in finite time, with
\begin{align}
t_f-t_0
\leq
\frac{r_{AT}(t_0)-r_c}{v_A\cos\sigma_c}.
\label{eq:direct_capture_time}
\end{align}
\end{corollary}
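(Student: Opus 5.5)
With $\alpha=0$ on $[t_0,t_f]$, \eqref{eq:direct_acceleration} reduces to $a_A=a_A^T$, so only the target kinematics matter and the defenders drop out. I would first show that the lead angle stays inside the cone $|\sigma_{AT}|\le\sigma_c$ (forward invariance). I would then use this bound on $\dot r_{AT}$ to get the closing speed and hence the capture time.

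\textbf{Kinematics.} With $v_T=0$, \eqref{eqn:rel_dyn_1}--\eqref{eqn:rel_dyn_3} give
\[
\dot r_{AT}=-v_A\cos\sigma_{AT},\qquad \dot\theta_{AT}=-\frac{v_A\sin\sigma_{AT}}{r_{AT}},\qquad \dot\sigma_{AT}=\frac{a_A^T}{v_A}+\frac{v_A\sin\sigma_{AT}}{r_{AT}}.
\]
While $r_{AT}>r_c$ we have $r_T^c=r_{AT}$.

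\textbf{Invariance.} Take the Lyapunov-like function $V=\tfrac12\sigma_{AT}^2$, or work directly at the boundary $\sigma_{AT}=\sigma_c$ (the case $-\sigma_c$ is symmetric by oddness). At $\sigma_{AT}=\sigma_c$,
\[
v_A\dot\sigma_{AT}=-\bar a\tanh\!\left(\frac{(v_AK_I+v_A^2/r_{AT})\sin\sigma_c}{\bar a}\right)+\frac{v_A^2}{r_{AT}}\sin\sigma_c .
\]
This is the main step. Both terms depend on $r_{AT}$, so I would define the function
\[
\phi(r)=\bar a\tanh\!\left(\frac{(v_AK_I+v_A^2/r)\sin\sigma_c}{\bar a}\right)-\frac{v_A^2}{r}\sin\sigma_c
\]
and show $\phi(r)\ge\phi(r_c)\ge0$ for $r\ge r_c$. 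The second inequality is exactly \eqref{eq:direct_capture_condition}.

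For monotonicity, write $u=1/r\in(0,1/r_c]$. Then
\[
\frac{d\phi}{du}=v_A^2\sin\sigma_c\left(\operatorname{sech}^2(\cdot)-1\right)\le0 .
\]
So $\phi$ is nonincreasing in $u$, hence nondecreasing in $r$. This gives $\dot\sigma_{AT}\le0$ at $\sigma_{AT}=\sigma_c$.

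The subtle point is that this only gives non-strict inequality. To conclude invariance rigorously I would use a Nagumo-type argument for the closed set $\{|\sigma_{AT}|\le\sigma_c\}$. Alternatively, I would observe that for $0<\sigma\le\sigma_c$ the same computation with $\sigma$ in place of $\sigma_c$ requires the condition at $\sigma$. To avoid needing that, I would compare at the boundary only: the vector field is Lipschitz, and it points inward or tangentially on the boundary, so Nagumo's theorem applies. This yields $|\sigma_{AT}(t)|\le\sigma_c$ on $[t_0,t_f]$.

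\textbf{Capture time.} Then $\dot r_{AT}\le-v_A\cos\sigma_c<0$ since $\sigma_c<\pi/2$. Integrating gives
\[
r_{AT}(t)\le r_{AT}(t_0)-v_A\cos\sigma_c\,(t-t_0).
\]
So $r_{AT}$ reaches $r_c$ no later than the bound in \eqref{eq:direct_capture_time}.

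Throughout, Theorem~\ref{thm:direct_safety} guarantees that the input stays admissible and the DMC constraints stay satisfied. The solution also exists on this interval, since the right-hand side is bounded while $r_{AT}\ge r_c>0$.
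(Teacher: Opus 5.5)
Your proposal is correct and follows essentially the same route as the paper: check $\dot\sigma_{AT}$ on the cone boundary $\sigma_{AT}=\pm\sigma_c$, use monotonicity in $q=1/r_{AT}$ so that the worst case is $r_{AT}=r_c$ (exactly condition \eqref{eq:direct_capture_condition}), conclude cone invariance, and integrate $\dot r_{AT}\le -v_A\cos\sigma_c$. Your explicit derivative $v_A^2\sin\sigma_c(\operatorname{sech}^2(\cdot)-1)\le0$ and the Nagumo argument with a Lipschitz right-hand side, which handles the non-strict boundary inequality, fill in details the paper leaves implicit.
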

\begin{proof}
In pure target mode, $a_A=a_A^T$. At the cone boundary
$\sigma_{AT}=\sigma_c$, the stationary-target dynamics give
\begin{align}
\dot\sigma_{AT}
=\frac{a_A^T}{v_A}
 +\frac{v_A}{r_{AT}}\sin\sigma_c.
\label{eq:direct_heading_boundary}
\end{align}
For $s=\sin\sigma_c$ and $q=1/r_{AT}$, the function
$\bar a\tanh((v_AK_I+v_A^2q)s/\bar a)-v_A^2qs$ is nonincreasing in $q$.
Its minimum over $r_{AT}\geq r_c$ therefore occurs at $r_{AT}=r_c$.
Condition \eqref{eq:direct_capture_condition} makes
\eqref{eq:direct_heading_boundary} nonpositive, and symmetry gives the
opposite inward inequality at $\sigma_{AT}=-\sigma_c$. Hence the heading
cone is invariant. Since
$\dot r_{AT}=-v_A\cos\sigma_{AT}\leq-v_A\cos\sigma_c<0$, integration until
$r_{AT}=r_c$ gives \eqref{eq:direct_capture_time}.
\end{proof}

\begin{figure*}[h!]
\centering
\begin{subfigure}[t]{0.245\linewidth}
\centering
\includegraphics[width=\linewidth]{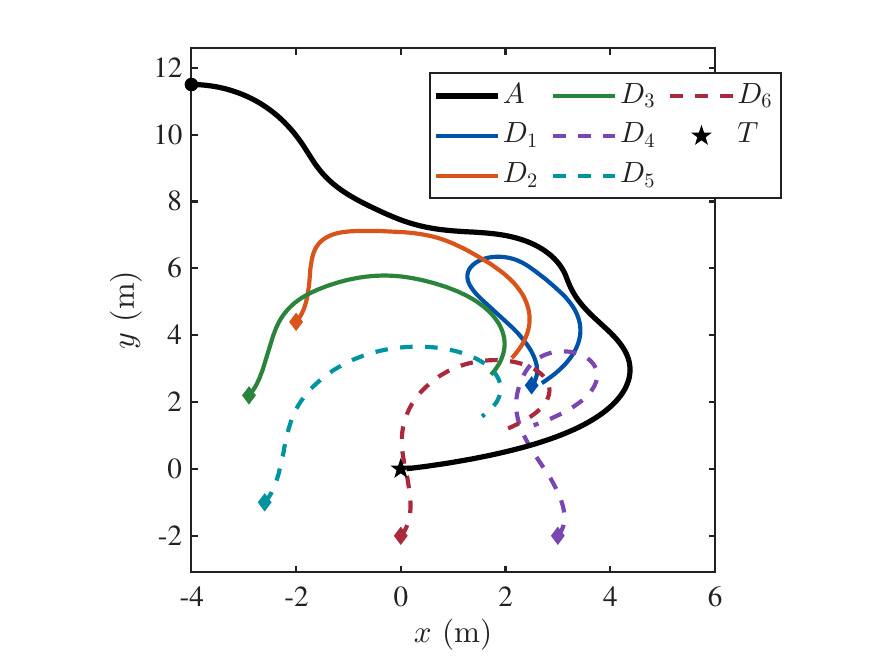}
\caption{Trajectories.}
\label{fig:3a}
\end{subfigure}\hfill
\begin{subfigure}[t]{0.245\linewidth}
\centering
\includegraphics[width=\linewidth]{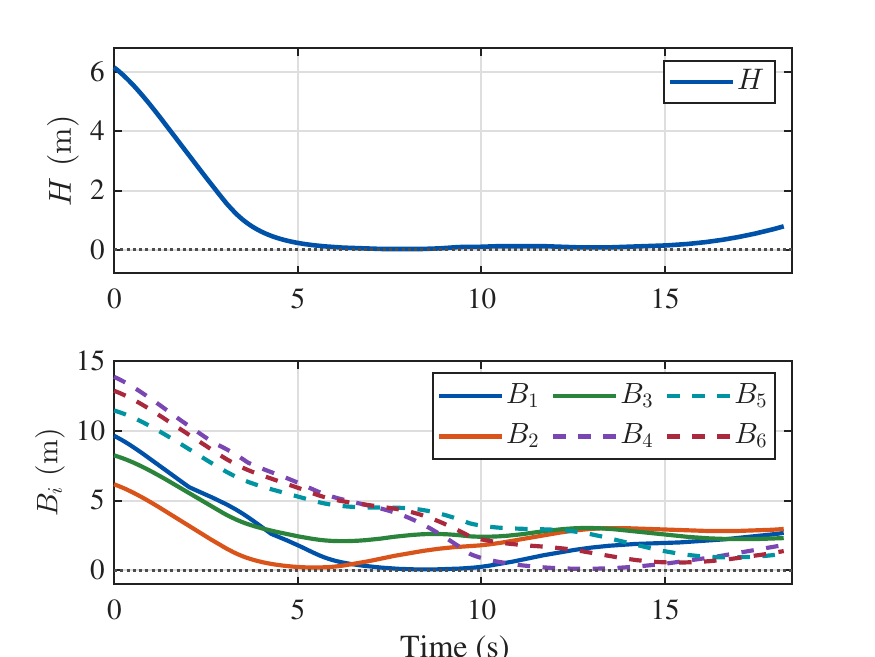}
\caption{Safety constraints.}
\label{fig:3b}
\end{subfigure}\hfill
\begin{subfigure}[t]{0.245\linewidth}
\centering
\includegraphics[width=\linewidth]{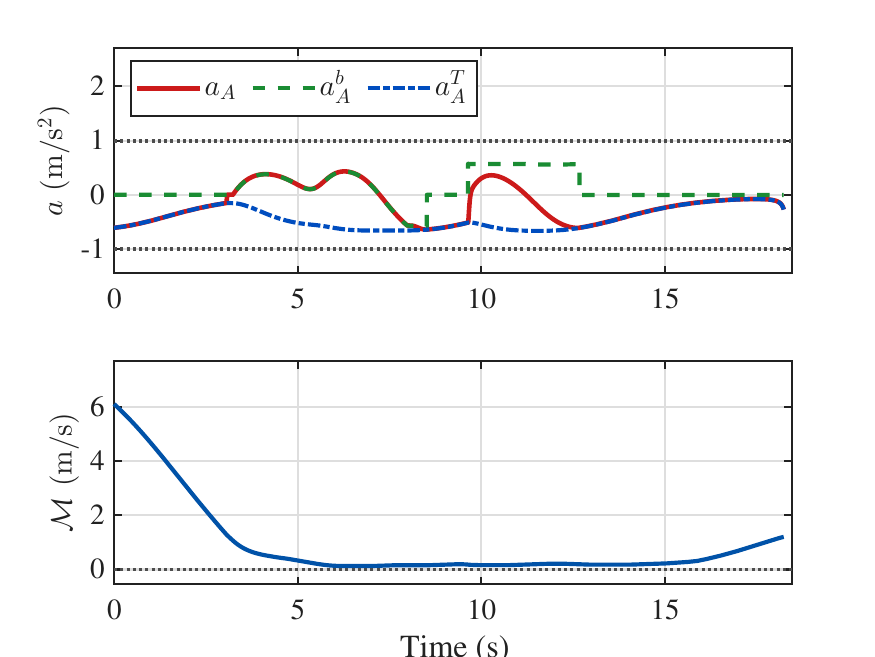}
\caption{Lateral acceleration and input feasibility.}
\label{fig:3c}
\end{subfigure}\hfill
\begin{subfigure}[t]{0.245\linewidth}
\centering
\includegraphics[width=\linewidth]{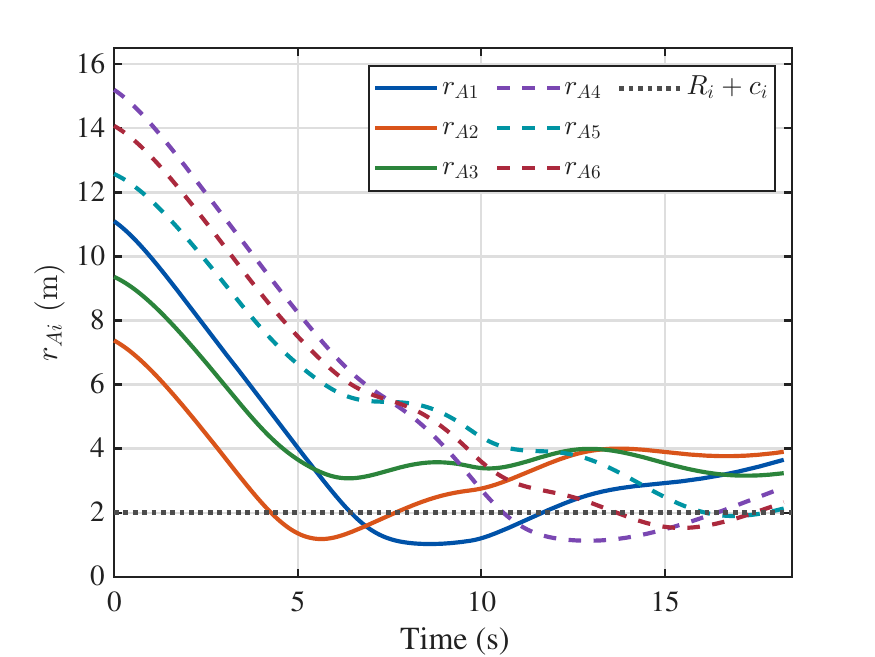}
\caption{Engagement ranges.}
\label{fig:3d}
\end{subfigure}
\caption{Comparison of attacker performance with six moving defenders and prescribed tolerance $\varepsilon_i=50^\circ$.}
\label{fig:moving6}
\end{figure*}

\section{Simulations}
\label{sec:sim}
We now evaluate the proposed guidance strategy for stationary and moving-defender engagements. In all simulations, the attacker speed is $v_A=1\,\mathrm{m/s}$. The defenders move at $v_i=0.5\,\mathrm{m/s}$ in the moving cases and remain stationary otherwise. Their parameters are $R_i=1.5\,\mathrm{m}$, $c_i=0.5\,\mathrm{m}$, and $\mu_i=0.7$. The guidance parameters are $\bar a=1\,\mathrm{m/s^2}$, $a_{\max}=1\,\mathrm{m/s^2}$, $r_c=0.1\,\mathrm{m}$, $\beta=0.5$, $\kappa=1$, $K_I=0.7$, $d=0.05$, $a_p=0.6\,\mathrm{m/s^2}$, and $g_p=0.8$.

For the first set of results, We consider two stationary defenders for various DMC tolerances $\varepsilon_i=0^\circ,10^\circ,20^\circ,30^\circ,40^\circ,$ and $50^\circ$, with dotted circles in the trajectory plots depicting the maximum engagement range $R_i+c_i=2\,\mathrm{m}$. The defenders are located at
$[0,\pm1.2]^\top\,\mathrm{m}$, with the attacker initialized at
$[-10,0]^\top\,\mathrm{m}$ and the target at
$[2.5,0]^\top\,\mathrm{m}$. Figure~\ref{fig:stationary_initial_error} compares the resulting trajectories for
$\varepsilon_i=0^\circ,10^\circ,20^\circ,30^\circ,40^\circ,$ and $50^\circ$. Fig.~\ref{fig:2a} illustrates the attacker trajectories demonstrating progressively greater penetration of the defenders' engagement ranges as $\varepsilon_i$ increases exhibiting the reduction in conservatism of the avoidance maneuver. The safety measure $\widehat H$ is depicted in Fig.~\ref{fig:2b} remains positive for all cases, illustrating preservation of the prescribed DMC constraints. The actual lateral acceleration also remains within the admissible limit as shown in Fig.~\ref{fig:2c}. Figure~\ref{fig:2d} shows the switching function activating the avoidance maneuver after approximately $5\,\mathrm{s}$ to preserve the prescribed risk constraints, followed by sustained activation of the target-interception maneuver toward the end of the engagement for successful target capture. The second subplot shows that the feasibility variable remains positive throughout the engagement to demonstrate that the bounded-input feasibility condition is maintained at all times.
\begin{figure}[t]
\centering
\begin{subfigure}[t]{0.48\columnwidth}
\centering
\includegraphics[width=\linewidth]{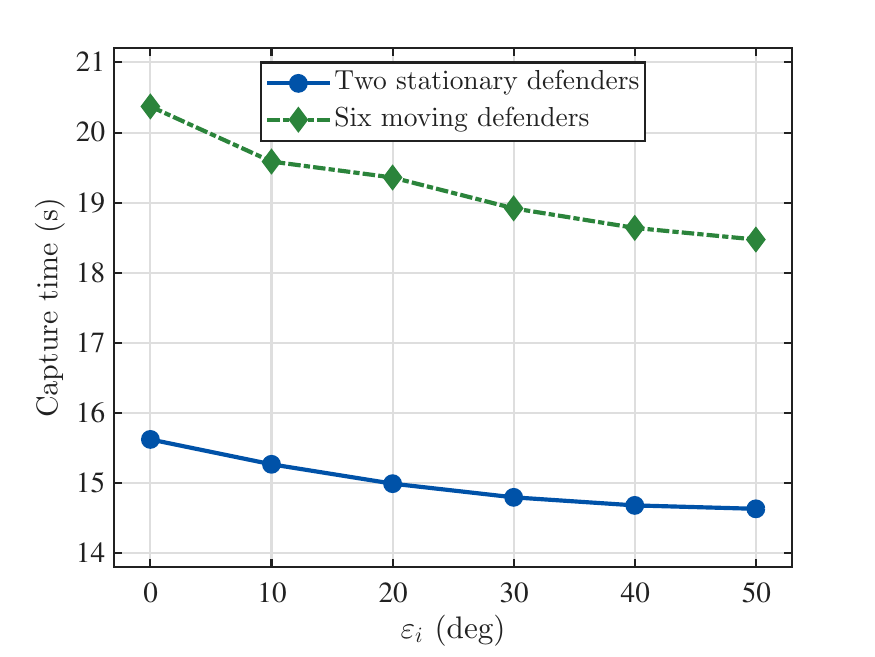}
\caption{Capture time.}
\label{fig:4a}
\end{subfigure}\hfill
\begin{subfigure}[t]{0.48\columnwidth}
\centering
\includegraphics[width=\linewidth]{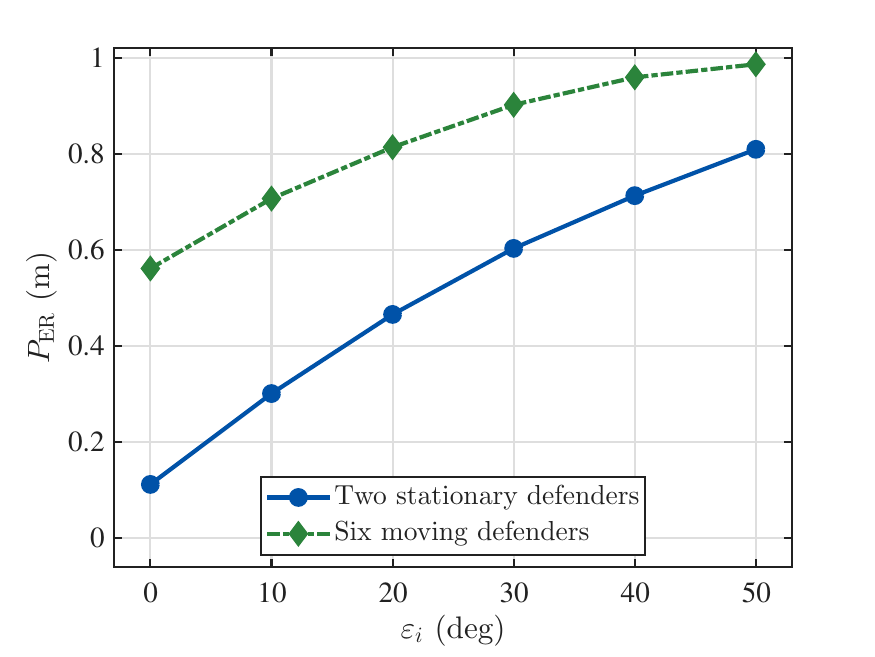}
\caption{Maximum engagement-range penetration.}
\label{fig:4b}
\end{subfigure}
\caption{Effect of the prescribed risk tolerance on interception performance.}
\label{fig:performance_sweep}
\end{figure}

Fig.~\ref{fig:moving6} illustrates the results for six defenders moving at $0.5 \mathrm{m/s}$ with $\varepsilon_i=50^\circ$, where defenders adopt an aggressive stance and utilize pure pursuit to intercept the attacker. Fig.~\ref{fig:3a} illustrates the attacker trajectory successfully avoiding the defenders and intercepting the target. The aggregate and individual safety measures remain always positive throughout the engagements, as observed in Fig.~\ref{fig:3b}, demonstrating that the aggregation preserves the individual safety constraints.  Fig.~\ref{fig:3c} depicts the attacker's acceleration remaining within the permissible bounds and switching between the avoidance maneuver and target interception maneuver, while second subplot demonstrates the input-feasibility. The first subplot demonstrates the smooth switching between target interception and threat avoidance command, while the second subplot illustrates the commanded acceleration that remains bounded throughout. Fig.~\ref{fig:3d} illustrates the attacker's ability to penetrate the defenders' maximum engagement range, as the attacker-defender distances fall below the $2,\mathrm{m}$ threshold indicated by the dotted black line. Finally, the Fig.~\ref{fig:performance_sweep} illustrates risk-performance trade-off for stationary and moving defenders. As shown in Fig.~\ref{fig:4a}, the target intercept time decreases with increasing $\varepsilon_i$, indicating that larger DMC tolerances allow shorter target interception trajectories.  Fig.~\ref{fig:4b} compares the engagement-range penetration calculated as $P_{\mathrm{ER}}\triangleq
 \max_{i,t}[R_i+c_i-r_{Ai}(t)]_+$ that increase with $\varepsilon_i$. Therefore, a larger DMC tolerance permits deeper penetration of the engagement range and reduces the avoidance detour, resulting in improved interception performance.

\section{Conclusions}
This work addressed safe target interception in contested environments by allowing the attacker to trade strict defender avoidance for a prescribed level of recoverable geometric exposure. The proposed formulation uses the DMC-based risk tolerance to distinguish between unsafe penetrations that require excessive recovery maneuvers and admissible penetrations from which the attacker can still recover. The developed guidance law combines multiple defender constraints, anticipates loss of effective turning authority, while respecting the available lateral-acceleration limits. The closed-loop results show that the prescribed risk bounds and actuator constraints can be maintained while preserving bounded guidance behavior and enabling target interception under appropriate engagement conditions. Future work will consider adaptive defender-dependent risk tolerances, uncertainty in defender parameters and measurements, and three-dimensional experimental validation.

\bibliographystyle{ieeetr}
\bibliography{references}

\end{document}